\theoremstyle{plain}
\newtheorem{theorem}{Theorem}
\newtheorem{lemma}{Lemma}
\newtheorem{proposition}{Proposition}
\theoremstyle{definition}
\newtheorem{corollary}{Corollary}
\theoremstyle{remark}
\newtheorem*{remark*}{Remark}
\numberwithin{equation}{section}
\begin{document}

\title{Stable subspaces of positive maps of matrix algebras}

\author[1]{Marek Miller\thanks{marek.miller@ift.uni.wroc.pl}}
\author[1]{Robert Olkiewicz\thanks{robert.olkiewicz@ift.uni.wroc.pl}}
\affil[1]{Institute of Theoretical Physics, Uniwersytet Wroc{\l}awski, Poland}

\date{}

\maketitle

\abstract{ 
We study stable subspaces of positive extremal maps of finite dimensional
matrix algebras that preserve trace and matrix identity 
(so-called bistochastic maps).
We have established the existence of the isometric-sweeping
decomposition for such maps.
As the main result of the paper,
we have shown that all extremal bistochastic maps
acting on the algebra of matrices of size 3x3
fall into one of the three possible categories,
depending on the form of the stable subspace of
the isometric-sweeping decomposition.
Our example of an extremal atomic positive map
seems to be the first one that handles the case
of that subspace being non-trivial.
Lastly, we compute the entanglement witness associated with the extremal map
and specify a large family of entangled states detected by it.

\vspace{0.1cm}
{\scriptsize \noindent \textbf{Keywords.}
positive maps, extremal, exposed, atomic,
entanglement witness
}

{\scriptsize \noindent \textbf{MSC2010 codes.}
47H07  15B48  81P40}
}

\section*{Introduction}
\label{sec:Introduction}

\paragraph{}
Positive maps of operator algebras is an increasingly
popular subject of research,
both as a mathematical theory interesting in itself,
as well as a prominent domain of applications to quantum theory
\cite{stormer2013positive}.
Started in the pioneering work of Kadison
\cite{kadison1952generalized},
and Stinespring
\cite{stinespring1955positive},
the theory reached its first major breakthrough,
when St{\o}rmer and Woronowicz presented a structure theorem
for positive maps of low dimensional matrix algebras
\cite{stormer1963positive,woronowicz1976positive}.
It is also necessary to mention the work of \mbox{M.-D.\,Choi} here,
as the major contribution to the theory at every stage
\mbox{\cite{choi1975completely,choi1980some,choi1977extremal}}.
The second turning point came in 1990s,
when Peres and P.\,M.\,R.\,Horodeckis pointed out at the intrinsic relation
between separable states of composite quantum systems
and positive maps of algebras of observables
\cite{peres1996separability,horodecki1996separability}.
It turns out that there is a one-to-one correspondence between
positive maps and entanglement witnesses
\cite{chruscinski2014entanglement},
and the Peres-Horodecki criterion,
whether a quantum state is separable,
is computationally feasible as long as a structure theorem
similar to that proven by St{\o}rmer and Woronowicz holds true.
Unfortunately,
whereas that correspondence does exists even for
the most general infinite-dimensional quantum systems
\mbox{\cite{stormer2009separable,miller2014horodeckis}},
higher dimensional situation lacks
the complete description of positive maps
and one needs a deeper understanding of their highly nontrivial structure.
Because the set of positive maps forms a convex cone,
its elements can be characterised as convex combinations of extremal ones.
This is a consequence of the celebrated Krein-Milman theorem
\cite{krein1940extreme}, as for every positive number $r > 0$,
the set of positive maps such that their operator norm: $||S|| \leq r$,
is compact.
This is certainly true for maps on finite-dimensional matrix algebras,
but applies also to the general setting of von Neumann algebras \cite{miller2014horodeckis}.
Despite considerable effort,
examples of extremal positive maps,
even in the low-dimensional case,
are scarce
\cite{choi1977extremal,osaka1992class,ha2011entanglement,chruscinski2011exposed}.
In this paper, we deal exclusively with extremal maps as extreme points in
the cone of all positive maps on matrix algebras (see below).

In order to facilitate the further study of extremal positive maps,
we propose to take a closer look at their stable subspaces.
In general, by a 'stable subspace' of a positive map 
$S \! : M_{n} \rightarrow M_{n}$,
where $M_{n}$ stands for the algebra of square complex matrices
of size $n$,
we mean a particular subspace $K_{S} \subset M_{n}$
(in fact, a JB$^{*}$-algebra, see 
\mbox{Corollary \ref{cor:KisJordanAlgebra}} 
below),
such that the map is a Jordan automorphism on $K_{S}$ and 
$S^{k} \rightarrow 0$ strongly
on the orthogonal complement of $K_{S}$,
as $k \rightarrow \infty$
(see the more precise definition below).
We say that the map $S$ is strongly ergodic,
if $K_{S}$ is isomorphic to the set of complex numbers.
Adapting the previous results in this matter from
\cite{olkiewicz1999environment},
we establish the existence of the isometric-sweeping
decomposition for a positive map that preserves trace and
the matrix identity
(so-called bistochastic maps).
Such decomposition says precisely that $S$ must be
a Jordan automorphism of $K_{S}$,
and its powers tend strongly to $0$ for any matrix orthogonal to $K_{S}$.
What is interesting in this respect,
is that if we narrow our interest to the maps acting on $M_{3}$,
it turns out that there are only three possible
forms the algebra $K_{S}$ can have.
Moreover, it seems that all previously known examples of extremal positive
maps of $M_{3}$ fall into only two of the three possible categories:
in particular,
it is easy to see that for a Jordan automorphism $S(A) = U^{*} A \, U$,
or $S(A) = U^{*} A^{t} \, U$,
where $U$ is unitary,
$K_{S} = M_{3}$;
whereas the Choi map
\cite{osaka1992class,choi1977extremal}
is strongly ergodic.
The remaining possibility among those three,
namely the two dimensional commutative subalgebra,
requires from us to expand on an example of a positive map
that cannot be classified as belonging to one of the types
previously encountered in the literature.
One of the main results of this paper is the proof
that the map provided by us as the example is indeed extremal and atomic
\cite{ha1998atomic}.
Moreover, we show that the map is also exposed.
This fact gives immediately a useful entanglement witness
that the authors hope could serve as a prototypic case in further study 
of the entanglement of three-level quantum systems
\cite{bertlmann2005optimal,caves2000qutrit}.

The structure of the paper is as follows.
In Section \ref{sec:Preliminaries},
we introduce some basic notions and notation.
In Section \ref{sec:Decomposition},
following \cite{olkiewicz1999environment},
we establish the existence of the isometric-sweeping
decomposition of bistochastic maps of finite dimensional
matrix algebra $M_{n}$.
Next, in Section \ref{sec:ExtremalMaps},
we provide the main result of this paper and prove that
for bistochastic map of $M_{3}$
the stable subspace of the decomposition takes only one of three
possible forms.
We end the paper with an original example of an extremal positive map of 
the algebra $M_{3}$;
we compute the entanglement witness associated with the extremal map
and specify a large family of entangled states detected by it.

\section{Preliminaries}
\label{sec:Preliminaries}

\paragraph{}
Let $n \in \mathbb{N}$.
The complex linear space $\mathbb{C}^{n}$ consists of column vectors:
$\eta, \xi$, etc.; the respective row vectors are denoted
$\eta^{t}, \xi^{t}$, etc.
The bar over a number, vector or a matrix always denotes
the element-wise complex conjugation: e.g. 
$\overline{z}$, $\overline{\eta}$, $\overline{A}$, etc.
The space $\mathbb{C}^{n}$ is equipped with the inner product
$\langle \eta, \xi \rangle = \eta^{*} \xi$,
where $\eta^{*} = \overline{\eta}^{\,t}$,
and $||\eta||^{2} = \eta^{*} \eta$.
To avoid confusion when two different spaces,
say $\mathbb{C}^{m}$ and $\mathbb{C}^{n}$, are involved,
we distinguish vectors belonging to one of them with an arrow:
e.g. $\vec{\eta}$, $\vec{\xi}$.
Let $M_{n} = M_{n}(\mathbb{C})$ be the algebra
of square complex matrices of size $n$.
We will use letters $A, B, C$, etc. to specify a matrix of $M_{n}$.
The norm of $A$, denoted by $||A||$, is understood as the operator 
norm of $A$ as a linear map acting on $\mathbb{C}^{n}$.
For a vector $\eta \in \mathbb{C}^{n}$, $\eta \neq 0$,
by $P_{\eta}$ or $P({\eta}) \in M_{n}$,
we denote the rank-one operator
$P_{\eta}  =  \eta \eta^{*}$.
Of course, when $||\eta|| = 1$, 
$P_{\eta}$ is the orthogonal projection onto one-dimensional space
spanned by $\eta$.
For $A \in M_{n}$, 
we denote its trace by $\text{Tr} A$;
and by $A^{t}$ and $A^{*} = \overline{A}^{\,t}$ 
its transpose and conjugate transpose, respectively.
We say that a matrix $A$ is positive-semidefinite,
or simply \emph{positive},
if $\eta^{*} A \eta \geq 0$ for any $\eta \in \mathbb{C}^{n}$
(i.e. $A = A^{*}$ and $A$ has a non-negative spectrum).

A linear map $S\!: M_{n} \rightarrow M_{n}$ is said to be positive,
indicated as $S \geq 0$,
if for any $A \in M_{n}$ such that $A \geq 0$,
we have $S(A) \geq 0$. 
For a positive map $S$,
its operator norm is given by $||S|| = S(\mathbf{1})$,
where $\mathbf{1}_{n}$, or simply $\mathbf{1}$,
is the identity matrix of $M_{n}$.
Any positive map is Hermitian,
i.e.  $S(A^{*}) = S(A)^{*}$, for all $A \in M_{n}$.
The identity map of $M_{n}$ is labelled $I_{n}$,
or simply $I$.
The convex cone of all positive maps of $M_{n}$ is denoted
by $\mathcal{P}(M_{n})$.
For $k \in \mathbb{N}$, a map $S \in \mathcal{P}(M_{n})$,
such that the map
$I_{k} \otimes S :  M_{k} \! \otimes \! M_{n}
 \rightarrow  M_{k} \! \otimes \! M_{n}$
is positive,
is called $k$-positive.
If a map is a $k$-positive map for every $k$,
it is called completely positive.
Similarly, a map is $k$-\emph{co}positive,
or completely copositive,
if $I_{k} \otimes (S \circ t)$ is positive for
some $k$, or for every $k$, respectively,
where $t \! : A \mapsto A^{t}$,
$A \in M_{n}$, is the transposition map.
A positive map $S$ is called decomposable,
if it can be written in the form
$S(A) = \Lambda_{1}(A) + \Lambda_{2} (A^{t})$,
where both maps $\Lambda_{1}, \Lambda_{2}$ are completely positive,
possibly zero
(see \cite{choi1975completely} for more details on completely positive maps).
It is called atomic \cite{ha1998atomic},
if it cannot be written as a sum of 2-positive and 2-copositive maps.
A positive map $S$ is extremal,
if for any positive map $T: M_{n} \rightarrow M_{n}$ such that
$S - T \in \mathcal{P}(M_{n})$,
i.e. $0 \leq T \leq S$,
we have $T = \alpha S$ for some number $0 \leq \alpha \leq 1$.
We denote the set of extremal maps of $\mathcal{P}(M_{n})$
by $\text{Ext}(M_{n})$.
It is true that every positive map
can be written as a convex combination of extremal ones.
Nevertheless, it could be of use to specify a dense subset
of $\text{Ext}(M_{n})$,
such that it is easier to handle its elements instead of
general extremal maps.
Hence, we say that a positive map $S$ of the matrix algebra
$M_{n}$ is exposed
\cite{marciniak2013rank},
if for all $T \in \mathcal{P}(M_{n})$ such that
$\text{Tr}\, P_{\xi} \, T(P_{\eta}) = 0$,
where $\mathbb{C}^{n} \ni \eta, \xi \neq 0$ is any pair for which
$\text{Tr}\, P_{\xi} \, S(P_{\eta}) = 0$,
we have that $T = \alpha S$, $\alpha \geq 0$.
Due to the Straszewicz theorem \cite{straszewicz1935exponierte},
the set of exposed maps of $M_{n}$ is indeed dense in $\text{Ext}(M_{n})$.

By a JB$^{*}$-algebra $K$ we understand a complex Banach space 
which is also a complex Jordan algebra.
We always assume that $\mathbf{1} \in K$.
See 
\cite{hanche1984jordan}
for the overview of the theory of JB$^{*}$-algebras.

\section{Decomposition of bistochastic maps}
\label{sec:Decomposition}

\paragraph{}
Let $S\!: M_{n} \rightarrow M_{n}$ be a positive map such that
$S(\mathbf{1}) = \mathbf{1}$,
$\text{Tr} S(A) = \text{Tr} A$,
for any $A \in M_{n}$.
We call such a map \emph{bistochastic}.
It is easy to see that $S$ is a contraction in
the Hilbert-Schmidt norm (\emph{HS-norm}) on $M_{n}$,
defined as
$||A||_{HS} = \left( \text{Tr} \, A^{*} A \right)^{1/2}$.
Indeed,
since $S$ fulfils the Kadison-Schwarz inequality,
\begin{equation}
\label{eq:SchwarzInequality}
    S(A^{*}) \, S(A) \: \leq \: S(A^{*} A),
\end{equation}
for any normal element $A \in M_{n}$ (see Prop.\,3.6 of \cite{choi1980some}),
assuming at first that $A = A^{*}$,
we have
\begin{equation}
\label{RandomLabel:712874}
    || S(A) ||_{HS}^{2} \: = \: \text{Tr} S(A)^{2}
    \: \leq \:  \text{Tr} S(A^{2}) \: = \:
        \text{Tr} A^{2} \: = \: ||A||_{HS}^{2}.
\end{equation}
By representing a general element $A \in M_{n}$ as a sum
$A = A_{1} + i A_{2}$,
where both $A_{1}, A_{2}$ are self-adjoint,
and repeating essentially the same calculation as in
\eqref{RandomLabel:712874},
we obtain the assertion.

Next,
we are going to define an isometric splitting of $S$,
drawing from the ideas presented in \cite{olkiewicz1999environment}.
Because $S$ is a contraction in the HS-norm on the Hilbert space $M_{n}$,
equipped with the Hilbert-Schmidt inner product,
that space can be decomposed into a direct sum of a space $K_{S}$,
defined as
\begin{equation}
\label{def:definitionofK}
    K_{S} \: = \: \left\{
        A \in M_{n}: \,\,
            || S^{k} A ||_{HS} = || S^{* k} A ||_{HS} = || A ||_{HS}, \:\:
            \forall k \in \mathbb{N}
        \right\},
\end{equation}
and its orthogonal complement $K_{S}^{\perp}$.
By $S^{*}$, we denote the bistochastic map of $M_{n}$,
which is the adjoint of $S$ as a linear operator on the Hilbert space $M_{n}$,
i.e. $\text{Tr}\, S^{*} (A)\, B = \text{Tr} A \, S(B)$,
for every $A, B \in M_{n}$.
It is true that $A \in K_{S}$,
if and only if
$S^{* k} S^{k} A = S^{k} S^{* k} A = A$,
for any $k \in \mathbb{N}$.
The following proposition,
taken with slight modification from
\cite{olkiewicz1999environment},
puts together some of the characteristics of $K_{S}$.

\begin{proposition}
\label{prop:propertiesofK}
Suppose $S$ is a bistochastic map of $M_{n}$,
and the subspace $K_{S}$ is defined as in \eqref{def:definitionofK}.
Then:

\let \oldlabelenumi \labelenumi
\renewcommand{\labelenumi}{\alph{enumi})}
\begin{enumerate}
\item $\mathbf{1} \in K_{S}$;
\item $A \in K_{S}$ implies that $A^{*} \in K_{S}$;
\item  $A  \in K_{S}$ implies that
$|A| = (A^{*}A)^{1/2} \in K_{S}$;
\item if $A, B \in K_{S}$, then $AB + BA \in K_{S}$;
\item if $A = A^{*} \in K_{S}$ and $A = \sum_{i=1}^{k} \lambda_{i} P_{i}$,
where each $\lambda_{i} \neq 0$ is different,
and each $P_{i}$ is a orthogonal projection in $M_{n}$,
then $P_{i} \in K_{S}$ for any $i = 1,2,\ldots,k$;
\item if $P \in K_{S}$ is an orthogonal projection,
then $S(P)$ and $S^{*}(P)$ are orthogonal projections as well,
and $\mathrm{dim} \, S(P) = \mathrm{dim} \, S^{*}(P) = \mathrm{dim} \, P$;
\item if $P, Q \in K_{S}$ are orthogonal projections such that $P Q = 0$,
then $S(P) S(Q) = S^{*}(P) S^{*}(Q) = 0$.
\end{enumerate}
\let \labelenumi \oldlabelenumi
\end{proposition}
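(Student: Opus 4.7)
The plan is to handle (a) and (b) by direct computation, extract the Kadison-Schwarz equality case as the main structural lemma for self-adjoint elements of $K_{S}$, and then derive (c)--(g) from it. Parts (a) and (b) are immediate: bistochasticity together with Hilbert-Schmidt duality yield $S(\mathbf{1}) = S^{*}(\mathbf{1}) = \mathbf{1}$, hence $\mathbf{1} \in K_{S}$; and hermiticity $S(A^{*}) = S(A)^{*}$, combined with the $*$-invariance of the HS-norm, gives $A \in K_{S} \Rightarrow A^{*} \in K_{S}$.

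The crux is to show that the real subspace $K_{S}^{sa} = \{A \in K_{S} : A = A^{*}\}$ is closed under squaring. Fix $A = A^{*} \in K_{S}$. Since $S$ is bistochastic and $\|S(A)\|_{HS} = \|A\|_{HS}$,
\begin{equation*}
    \text{Tr}\bigl[S(A^{2}) - S(A)^{2}\bigr] \: = \: \text{Tr}(A^{2}) - \text{Tr}(S(A)^{2}) \: = \: 0,
\end{equation*}
while $S(A^{2}) - S(A)^{2} \geq 0$ by the Kadison-Schwarz inequality \eqref{eq:SchwarzInequality} applied to the self-adjoint (hence normal) $A$. Being positive with zero trace, the difference vanishes: $S(A^{2}) = S(A)^{2}$. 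The same identity holds for $S^{*}$, which is again bistochastic and for which $K_{S^{*}} = K_{S}$ by the symmetry of the defining condition. One also verifies that $K_{S}$ is invariant under both $S$ and $S^{*}$: for $A \in K_{S}$, the estimates $\|S^{l}(SA)\|_{HS} = \|S^{l+1}A\|_{HS} = \|A\|_{HS}$ and, using $S^{*}SA = A$, $\|S^{*l}(SA)\|_{HS} = \|S^{*(l-1)}A\|_{HS} = \|A\|_{HS}$ show $SA \in K_{S}$. Combining these ingredients,
\begin{equation*}
    S^{*k} S^{k} (A^{2}) \: = \: S^{*k}\bigl[(S^{k} A)^{2}\bigr] \: = \: \bigl[S^{*k}(S^{k} A)\bigr]^{2} \: = \: A^{2},
\end{equation*}
where the middle step invokes the $S^{*}$-analog of the square-preservation applied to $S^{k} A \in K_{S}^{sa}$; likewise $S^{k} S^{*k}(A^{2}) = A^{2}$. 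Thus $A^{2} \in K_{S}$, which makes $K_{S}^{sa}$ a real Jordan subalgebra of $M_{n}$.

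The remaining parts now follow. For (d), the polarisation $(A+B)^{2} - (A-B)^{2} = 2(AB+BA)$ yields $AB + BA \in K_{S}$ for $A, B \in K_{S}^{sa}$, extending to arbitrary $A, B \in K_{S}$ through the decomposition into self-adjoint parts (by (b)). For (e), the spectral projections of $A = A^{*} \in K_{S}$ with distinct eigenvalues $\lambda_{i}$ are Lagrange interpolation polynomials $p_{i}(A)$, hence lie in $K_{S}^{sa}$ by closure under squaring. Part (c) follows from (e) by writing $|A| = \sum_{i} |\lambda_{i}| P_{i}$ for self-adjoint $A$, and in general by applying the same reasoning to $A^{*}A$ via the JB$^{*}$-algebra continuous functional calculus (cf.\ Corollary \ref{cor:KisJordanAlgebra}). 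For (f), $S(P)^{2} = S(P^{2}) = S(P)$ combined with the self-adjointness of $S(P)$ identifies it as an orthogonal projection, with $\dim S(P) = \text{Tr}\, S(P) = \text{Tr}\, P = \dim P$. For (g), $P + Q$ is itself a projection (since $PQ = QP = 0$), so by (f) $(S(P) + S(Q))^{2} = S(P) + S(Q)$; subtracting $S(P)^{2} + S(Q)^{2}$ yields the anti-commutation $S(P)S(Q) + S(Q)S(P) = 0$, and then multiplying this relation on the left and on the right by $S(Q)$ and subtracting gives $[S(P), S(Q)] = 0$, which forces $2\, S(P)S(Q) = 0$. The $S^{*}$ claim is identical.

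The main obstacle is the squaring step: upgrading the Schwarz equality $S(A^{2}) = S(A)^{2}$ to genuine membership $A^{2} \in K_{S}$ requires both the $S$- and $S^{*}$-invariance of $K_{S}$ and the symmetric availability of the equality for $S^{*}$, so that the three-line chain above closes back to $A^{2}$. Once this bootstrap is secured, parts (c)--(g) unfold as essentially spectral-theoretic and Jordan-algebraic consequences.
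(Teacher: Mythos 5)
Your treatment of (a), (b) and (d)--(g) is correct, and your route to the key closure property is genuinely different from the paper's. The paper applies the Kadison--Schwarz inequality to the composed map $S^{*k}S^{k}$, obtains $A^{2}\leq S^{*k}S^{k}(A^{2})$, and squeezes Hilbert--Schmidt norms to conclude $\|S^{k}(A^{2})\|_{HS}=\|A^{2}\|_{HS}$ directly. You instead extract the exact multiplicativity $S(A^{2})=S(A)^{2}$ on self-adjoint elements of $K_{S}$ from the trace identity $\text{Tr}\,[S(A^{2})-S(A)^{2}]=0$, prove the $S$- and $S^{*}$-invariance of $K_{S}$, and compose; the bookkeeping in your three-line chain is correct. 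Your version is slightly stronger --- it already yields the Jordan-homomorphism property that the paper only records in Corollary \ref{cor:KisJordanAlgebra} by citation, and it makes (f) and (g) one-line consequences --- while the paper's squeeze avoids having to verify invariance of $K_{S}$ under $S$ and $S^{*}$. Both are sound, and your derivations of (e) via Lagrange interpolation and of (g) via the anticommutation-then-commutation trick are fine.

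There is, however, a genuine gap in your proof of (c) for non-normal $A$. The phrase ``applying the same reasoning to $A^{*}A$'' presupposes $A^{*}A\in K_{S}$, which you have not established: parts (b) and (d) only give the symmetrised product $A^{*}A+AA^{*}$, and the Kadison--Schwarz inequality that your trace argument leans on is available for merely positive unital maps only on \emph{normal} elements, so it cannot be run on $A$ itself. For normal $A$ your reduction does work ($A^{*}A=\tfrac{1}{2}(A^{*}A+AA^{*})\in K_{S}$, then apply (e) to the positive matrix $A^{*}A$), but for general $A$ the step cannot be repaired within this framework. Indeed, take $S(X)=\tfrac{1}{2}(X+X^{t})$ on $M_{2}$, a bistochastic positive map which is the orthogonal projection onto $M_{2}^{s}$, so that $K_{S}=M_{2}^{s}$; the element $A=\left(\begin{smallmatrix}1 & i\\ i & -1\end{smallmatrix}\right)$ lies in $M_{2}^{s}$, yet $A^{*}A=\left(\begin{smallmatrix}2 & 2i\\ -2i & 2\end{smallmatrix}\right)$ and $|A|=\left(\begin{smallmatrix}1 & i\\ -i & 1\end{smallmatrix}\right)$ do not. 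The paper sidesteps the issue by citing Prop.~5\,b) of \cite{olkiewicz1999environment}, but there the maps are completely positive, for which the Schwarz inequality holds for all elements; under mere positivity, part (c) should be read as restricted to normal $A$, which is the only case used later in the paper. So the obstruction you hit is in the statement, not merely in your argument --- but as written your proof of (c) asserts the general case without justification.
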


\begin{proof}
a) obvious, since $S$ is bistochastic;
b), c) as in Prop. 5\,a) and 5\,b) of \cite{olkiewicz1999environment}.

d)  Suppose at first that $A = A^{*} \in K_{S}$.
Now, $S^{*k} S^{k}(A) = S^{k} S^{*k}(A) = A$,
and, as a result of the Kadison-Schwarz inequality applied to the map
$S^{*k} S^{k}$,
\begin{equation}
A^{2} \: = \: \left( S^{*k} S^{k}(A) \right) \left( S^{*k} S^{k}(A) \right)
    \: \leq \: S^{*k} S^{k}(A^{2}).
\end{equation}
Hence
\begin{equation}
||A^{2}||_{HS} \: \leq \: || S^{*k} S^{k}(A^{2}) ||_{HS} \: \leq \:
    || S^{k}(A^{2}) ||_{HS} \: \leq \: ||A^{2}||_{HS},
\end{equation}
i.e. $|| S^{k}(A^{2}) ||_{HS} = ||A^{2}||_{HS}$.
By a similar argument
$|| S^{*k}(A^{2}) ||_{HS} = ||A^{2}||_{HS}$,
and thus $A^{2} \in K_{S}$.
Now, for any $A, B \in K_{S}$, such that $A = A^{*}$, $B = B^{*}$,
we have
$AB + BA = (A + B)^{2} - A^{2} - B^{2} \in K_{S}$.
For a general $A \in K_{S}$, we write
$A = A_{1} + i A_{2}$, where $A_{1}, A_{2}$ are both Hermitian.
Then
$A^{2} = A_{1}^{2} - A_{2}^{2} + i(A_{1} A_{2} + A_{2} A_{1}) \in K_{S}$.
It is now evident that $AB+BA \in K_{S}$ for general $A,B \in K_{S}$.

e) as in Prop. 5\,d);
f), g) as in Prop. 6\,a) and 6\,c) of
\cite{olkiewicz1999environment}.
\end{proof}

From what has been said above, 
follows immediately the next observation.

\begin{corollary}
\label{cor:KisJordanAlgebra}
The space $K_{S}$, with the matrix norm $|| \cdot ||$,
and the multiplication $A \circ B = \frac{1}{2}(AB + BA)$,
is a JB$^{*}$-algebra.
The map $S$ is a Jordan automorphism on $K_{S}$ and 
\begin{equation}
\label{eq:SGoesTo0OnKOrth}
    \lim \limits_{k\rightarrow \infty} S^{k}(A)  \: = \:
    \lim \limits_{k\rightarrow \infty} S^{*k}(A) \: = \: 0,
\end{equation}
for any $A \in K_{S}^{\perp}$.
\end{corollary}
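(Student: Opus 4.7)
My plan is to handle the three assertions in sequence. For the JB$^{*}$-algebra structure, I first verify that $K_{S}$ is a norm-closed linear subspace of $M_{n}$: the defining equality $||S^{k}A||_{HS} = ||A||_{HS}$ is equivalent to $\langle (I - S^{*k}S^{k})A, A\rangle = 0$, and since $I - S^{*k}S^{k} \geq 0$ this is the linear condition $S^{*k}S^{k}A = A$ (and analogously for $S^{*}$). Items (a), (b), (d) of Proposition \ref{prop:propertiesofK} then immediately give that $K_{S}$ contains $\mathbf{1}$, is closed under the involution, and closed under the Jordan product $A \circ B = \tfrac{1}{2}(AB + BA)$. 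A norm-closed Jordan $*$-subalgebra of $M_{n}$ containing the unit is by definition a JB$^{*}$-algebra.

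For the automorphism property, the decisive step is the identity $S(A^{2}) = S(A)^{2}$ for every self-adjoint $A \in K_{S}$. This is obtained by tracing the Kadison-Schwarz inequality \eqref{eq:SchwarzInequality}: its left-hand side has trace $\mathrm{Tr}\, A^{2}$ by trace preservation, while the right-hand side has trace $||S(A)||_{HS}^{2} = ||A||_{HS}^{2} = \mathrm{Tr}\, A^{2}$ (using $A \in K_{S}$ and that $S(A)$ is self-adjoint). A positive-semidefinite matrix with vanishing trace is zero, so the inequality collapses to equality. Polarising with $A$, $B$ and $A + B$ (all self-adjoint members of $K_{S}$) yields $S(A \circ B) = S(A) \circ S(B)$ for self-adjoint arguments, and bilinear extension via $A = A_{1} + i A_{2}$ handles general $A, B \in K_{S}$. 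Compatibility with the involution is automatic from $S$ being Hermitian, and the identities $S^{*k}S^{k}A = S^{k}S^{*k}A = A$ valid on $K_{S}$ say precisely that $S^{*}$ is a two-sided inverse for $S|_{K_{S}}$, so $S$ is bijective on $K_{S}$.

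The delicate part is \eqref{eq:SGoesTo0OnKOrth}. I first note that $K_{S}^{\perp}$ is $S$-invariant: for $A \in K_{S}^{\perp}$ and $B \in K_{S}$ one has $\langle B, S(A)\rangle = \langle S^{*}(B), A\rangle = 0$ because $K_{S}$ is also $S^{*}$-invariant. Thus $R := S|_{K_{S}^{\perp}}$ is a contraction of a finite-dimensional Hilbert space, and \eqref{eq:SGoesTo0OnKOrth} reduces to showing that the spectral radius of $R$ is strictly less than $1$. Suppose for contradiction that $RA = \lambda A$ with $0 \neq A \in K_{S}^{\perp}$ and $|\lambda| = 1$. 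Then $||SA||_{HS} = ||A||_{HS}$ rewrites as $\langle (I - S^{*}S)A, A\rangle = 0$; since $I - S^{*}S \geq 0$, this forces $S^{*}SA = A$ and hence $S^{*}A = \bar{\lambda} A$. Iterating yields $||S^{k}A||_{HS} = ||S^{*k}A||_{HS} = ||A||_{HS}$ for every $k$, placing $A \in K_{S}$, a contradiction. The main obstacle I anticipate is precisely this last step — bridging from the metric statement $||SA||_{HS} = ||A||_{HS}$ to the operator-algebraic conclusion $S^{*}A = \bar{\lambda}A$ — for which the Cauchy-Schwarz argument based on $I - S^{*}S \geq 0$ is the cleanest device; everything else reduces to routine finite-dimensional spectral reasoning.
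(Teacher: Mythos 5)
Your proof is correct and follows the same route as the paper: the Jordan $*$-algebra structure of $K_{S}$ is read off from Proposition \ref{prop:propertiesofK}, multiplicativity of $S$ on $K_{S}$ comes from the collapse of the Kadison--Schwarz inequality under trace preservation, invertibility from $S^{*}S = SS^{*} = I$ on $K_{S}$, and the decay on $K_{S}^{\perp}$ from the definition of $K_{S}$. The only difference is that you supply self-contained arguments --- the trace/positivity collapse giving $S(A^{2}) = S(A)^{2}$, and the spectral-radius bound on $S|_{K_{S}^{\perp}}$ --- for the two steps the paper delegates to Prop.\ 7a) of the cited reference and to the remark that \eqref{eq:SGoesTo0OnKOrth} ``follows easily''; both of your arguments are sound.
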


\begin{proof}
It is obvious that $K_{S}$ is a complex Banach space. 
From Proposition \ref{prop:propertiesofK}\,b),
we have that $K_{S}$ is equipped with involution,
and from point d) 
that $K_{S}$ is a Jordan algebra.
Since the map $S$ is invariant with respect to the subspaces
$K_{S}$ and $K_{S}^{\perp}$,
and of course $M_{n} = K_{S} \oplus K_{S}^{\perp}$,
we see that $S$ splits into direct sum
$S = S_{1} \oplus S_{2}$,
where $S_{1} = S_{| K_{S}}$, $S_{2} = S_{| K_{S}^{\perp}}$.
From Prop. 7a) of
\cite{olkiewicz1999environment},
we have that $S(A^{*} A) = S(A)^{*} S(A)$,
i.e. $S$ is a Jordan homomorphism on $K_{S}$.
Moreover,
because $S^{*} S = S S^{*} = I$ on $K_{S}$,
the map $S_{1}$ is invertible,
and thus it is a Jordan automorphism
(see Definition 3.2.1(6) in \cite{bratteli2003operator}).
From the definition \eqref{def:definitionofK} of the space $K_{S}$,
follows easily that 
eq. \eqref{eq:SGoesTo0OnKOrth} holds.
\end{proof}

\begin{theorem}
\label{thm:FromESbook}
Let $K \subset M_{n}$ be a JB$^{*}$-subalgebra of $M_{n}$.
There is a bistochastic map
$S \! : M_{n} \rightarrow M_{n}$ such that $K = K_{S}$.
\end{theorem}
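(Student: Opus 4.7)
The idea is to take $S$ to be the Hilbert--Schmidt orthogonal projection $E\!:\!M_{n}\rightarrow K$. Since $\mathbf{1}\in K$ we have $E(\mathbf{1})=\mathbf{1}$, and because the orthogonal projection is HS-self-adjoint,
\[
    \text{Tr}\,E(A) \:=\: \langle\mathbf{1},E(A)\rangle_{HS}
    \:=\:\langle E(\mathbf{1}),A\rangle_{HS}
    \:=\:\text{Tr}\,A,
\]
so $E$ is unital and trace preserving.

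The crucial point is to show that $E$ is positive. This relies on the structural classification of unital JB$^{*}$-subalgebras of $M_{n}$: up to inner automorphism, $K$ decomposes as a Jordan direct sum $K=\bigoplus_{i} K_{i}$, where each factor $K_{i}$ is either a unital $*$-subalgebra of some matrix block (a C$^{*}$-type factor, on which $E$ restricts to the standard, completely positive conditional expectation) or the fixed-point set of a $*$-antiautomorphism $\tau_{i}$ of some matrix block, exemplified by the symmetric matrices $\{A:A=A^{t}\}$. On a factor of the latter type one checks that the HS-orthogonal projection is given explicitly by $\tfrac{1}{2}(I+\tau_{i})$, which is positive because a $*$-antiautomorphism of a matrix algebra preserves positivity (if $A=B^{*}B$ then $\tau_{i}(A)=\tau_{i}(B)\tau_{i}(B)^{*}\geq 0$). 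Assembling the factor-wise projections yields positivity of $E$ and therefore bistochasticity.

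It remains to verify that $K_{E}=K$. As $E$ is idempotent and HS-self-adjoint, $E^{k}=E^{*k}=E$ for every $k\geq 1$, so the defining condition \eqref{def:definitionofK} collapses to $||E(A)||_{HS}=||A||_{HS}$; by the Pythagorean identity applied to the decomposition $A=E(A)+(A-E(A))$, this holds precisely when $A-E(A)=0$, i.e.\ when $A\in K$. Hence $K_{E}=K$, and $S:=E$ meets all the requirements.

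The principal obstacle is the positivity step: bistochasticity of $E$ modulo positivity and the identification $K_{E}=K$ are essentially formal, whereas positivity rests on the non-trivial classification of JB$^{*}$-subalgebras of $M_{n}$ into C$^{*}$-type and antiautomorphism-fixed-point type factors, together with the explicit description of the orthogonal projection on each factor.
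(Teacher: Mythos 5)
Your starting move coincides with the paper's: take $S$ to be the Hilbert--Schmidt orthogonal projection $E$ onto $K$. The unitality and trace preservation, and your identification $K_{E}=K$ via idempotency plus the Pythagorean identity, are correct and essentially what the paper does. The divergence is in the positivity step, and there your argument has a genuine gap: the structural classification you invoke is false as stated for general $n$. A unital JB$^{*}$-subalgebra of $M_{n}$ need not split into factors that are each a unital $*$-subalgebra or the fixed-point set of a $*$-antiautomorphism; the finite-dimensional Jordan factors also include the spin factors $V_{k}$, and for $k=4$ or $k\geq 6$ these are of neither type. Concretely, inside $M_{4}$ take four mutually anticommuting self-adjoint unitaries (e.g.\ $\sigma_{1}\otimes\sigma_{1},\ \sigma_{1}\otimes\sigma_{2},\ \sigma_{1}\otimes\sigma_{3},\ \sigma_{2}\otimes\mathbf{1}_{2}$); their complex span together with $\mathbf{1}_{4}$ is a simple, $5$-dimensional unital JB$^{*}$-subalgebra, which is not an associative $*$-subalgebra and has the wrong dimension to be the fixed-point algebra of any $*$-antiautomorphism of $M_{4}$ ($10$ for the symmetric type, $6$ for the symplectic type). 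So your case analysis is incomplete for $n\geq 4$, even though it happens to cover every subalgebra that occurs in $M_{3}$.

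The paper avoids the classification entirely with a short best-approximation argument, which repairs the gap uniformly in $n$: for $A\geq 0$, first note that $E(A)^{*}\in K$ and $\|A-E(A)^{*}\|_{HS}=\|A-E(A)\|_{HS}$, so uniqueness of the best approximation in $K$ gives $E(A)^{*}=E(A)$. Write $E(A)=B_{+}-B_{-}$ with $B_{\pm}\geq 0$ and $B_{+}B_{-}=0$; since $K$ contains $\mathbf{1}$ and all Jordan powers of $E(A)$, it contains $|E(A)|$ and hence $B_{\pm}$. Then
\begin{equation*}
\|A-E(A)\|_{HS}^{2}\:=\:\|A-B_{+}\|_{HS}^{2}+2\,\mathrm{Tr}\,(A B_{-})+\mathrm{Tr}\,B_{-}^{2}\:\geq\:\|A-B_{+}\|_{HS}^{2},
\end{equation*}
and minimality forces $E(A)=B_{+}\geq 0$. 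If you substitute this for your classification step, the rest of your proposal stands as written.
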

\begin{proof}
We consider the space $M_{n}$ to be a Hilbert space equipped
with the Hilbert-Schmidt inner product.
Let $S\!: M_{n} \rightarrow K \subset M_{n}$ be the 
orthogonal projection onto $K$.
It is evident that $S(\mathbf{1}) = \mathbf{1}$,
and because $S = S^{*}$, 
the map $S$ also preserves trace and
the space $K$ is a stable subspace for $S$,
provided $S$ is a positive map.
Let then $A \in M_{n}$ be a positive matrix.
Since $K$ is JB$^{*}$-algebra,
$(SA)^{*} \in K$.
Because 
$|| A - SA ||_{HS} = || A - (SA)^{*} ||_{HS}$,
and $SA$ is the best approximation of $A$ in the space $K$,
we have that $(SA)^{*} = SA$.
We write then $SA = B_{+} - B_{-}$,
where both $B_{+}, B_{-}$ are positive and 
$B_{+} B_{-} = 0$.
We have assumed that $\mathbf{1} \in K$,
and because $(SA)^{k} \in K$ for any $k \in \mathbb{N}$,
we have that also the modulus $|S A| \in K$
(compare the explicit formula for
the square-root of a positive matrix in \cite{bratteli2003operator}, p. 34).
Hence both $B_{+}, B_{-} \in K$.
We compute
\begin{multline}
|| A - SA ||_{HS}^{2} \: = \: \text{Tr} \, ( A - SA )^{2} \: = \:
    \text{Tr} \, ( A - B_{+} + B_{-} )^{2} \: = \: \\
    \text{Tr} \, ( A - B_{+} )^{2} + 
        2 \, \text{Tr} \, ( A - B_{+} ) B_{-} + \text{Tr} \, B_{-}^{2}
            \: = \: \\
    || A - B_{+} ||_{HS}^{2} + 
        2 \, \text{Tr} \,  A \, B_{-}+ \text{Tr} \, B_{-}^{2} 
    \: \geq \: || A - B_{+} ||_{HS}^{2}.
\end{multline}
Again, since $SA$ is the best approximation of $A$ in $K$,
we have $SA = B_{+}$, i.e. $SA \geq 0$, which ends the proof.
\end{proof}

From the above,
we know that $K_{S}$ has additional structure of a JB$^{*}$-algebra.
We will see in the following that,
with additional assumption imposed on S,
this algebra must necessarily
have a specific structure,
at least for low dimensional matrices.

\section{Extremal positive maps}
\label{sec:ExtremalMaps}

\paragraph{}
In this section, we focus on maps of the algebra $M_{3}$.
For convenience, let us label four orthogonal projections:
\begin{equation}
\label{def:OrthogonalProjections}
    P_{1} = \begin{pmatrix}
        1 & 0 & 0 \\
        0 & 0 & 0 \\
        0 & 0 & 0
    \end{pmatrix} , \quad \quad
    P_{2} = \begin{pmatrix}
        0 & 0 & 0 \\
        0 & 1 & 0 \\
        0 & 0 & 0
    \end{pmatrix} , \quad \quad
    P_{3} = \begin{pmatrix}
        0 & 0 & 0 \\
        0 & 0 & 0 \\
        0 & 0 & 1
    \end{pmatrix},
\end{equation}
and $P_{12} = P_{1} + P_{2}$.

It is of importance to us that
Theorems 5.3.8 and 6.2.3 of \cite{hanche1984jordan}
imply that every JB$^{*}$-algebra contained in $M_{3}$ is
isomorphic to one of the following:
$\mathbb{C}\mathbf{1}$, 
$\mathbb{C} P_{12} \oplus \mathbb{C} P_{3}$,
$\mathbb{C} P_{1} \oplus \mathbb{C} P_{2} \oplus \mathbb{C} P_{3}$,
$M_{2} \oplus \mathbb{C} P_{3}$,
$M_{2}^{s} \oplus P_{3}$, 
$M_{3}^{s}$,
and
$M_{3}$ itself,
where $M_{n}^{s}$ is the Jordan algebra of symmetric matrices of 
size $n$: $M_{2}^{s} = \{ A \in M_{n}: A = A^{t} \}$.  

\begin{theorem}
\label{thm:ExposedMaps}
Let $S\!: M_{3} \rightarrow M_{3}$ be an extremal bistochastic map.
Then the JB$^{*}$-algebra $K_{S}$ is isomorphic to one of the following:
$\mathbb{C}\mathbf{1}$, $\mathbb{C} P_{12} \oplus \mathbb{C} P_{3}$,
or
$M_{3}$.
\end{theorem}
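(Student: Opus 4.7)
The approach is by contradiction. Among the seven isomorphism classes of JB$^{*}$-subalgebras of $M_{3}$ recalled just before the theorem, I need to rule out the four intermediate ones --- $\mathbb{C}P_{1}\oplus\mathbb{C}P_{2}\oplus\mathbb{C}P_{3}$, $M_{2}\oplus\mathbb{C}P_{3}$, $M_{2}^{s}\oplus P_{3}$ and $M_{3}^{s}$ --- by exhibiting in each case a nontrivial decomposition $S = \tfrac{1}{2}(S_{1}+S_{2})$ with $S_{1},S_{2}$ distinct positive bistochastic maps, contradicting extremality.

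For each of the three intermediate algebras with nontrivial center, I would first invoke Corollary~\ref{cor:KisJordanAlgebra} to see that $S|_{K_{S}}$ is a Jordan automorphism, and Proposition~\ref{prop:propertiesofK}(f) to conclude that the central projections $P_{12},P_{3}$ (or $P_{1},P_{2},P_{3}$) are mapped to projections of the same rank. When these ranks differ the projections are fixed; in the fully symmetric case $\mathbb{C}P_{1}\oplus\mathbb{C}P_{2}\oplus\mathbb{C}P_{3}$ one first composes $S$ with a permutation-unitary conjugation --- a Jordan automorphism of $M_{3}$, hence a bistochastic map, and by bijectivity on the positive cone its composition with $S$ preserves extremality. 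The Jordan bimodule identity $S(QA+AQ)=QS(A)+S(A)Q$ for $Q\in K_{S}$ central and $A\in M_{3}$ (Proposition~7(a) of~\cite{olkiewicz1999environment}) then forces $S$ to preserve the block decomposition $M_{3}=\bigoplus_{i,j}P_{i}M_{3}P_{j}$, and analogous use of the Jordan structure on the $M_{2}$-factor further constrains the action on the diagonal blocks. The residual freedom is captured by a small number of complex parameters governing the action on the off-diagonal blocks (for example, $S(E_{ij})=\alpha_{ij}E_{ij}+\beta_{ij}E_{ji}$ for the matrix units $E_{ij}$ with $i\neq j$ in the first case).

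The positivity of $S$ cuts out a convex body $\Omega$ in this parameter space, and the crucial observation is that the parameters of $S$ lie in the \emph{relative interior} of $\Omega$: a point on the relative boundary would force the restriction $S|_{K_{S}^{\perp}}$ to admit a unit-norm fixed vector, producing a nonzero $A\in K_{S}^{\perp}$ with $||S^{k}A||_{HS}=||A||_{HS}$ for all $k$, which by \eqref{def:definitionofK} would put $A$ in $K_{S}$, contradicting the assumption that $K_{S}$ equals (rather than strictly contains) the given intermediate algebra. Interiority then permits a small symmetric perturbation of the free parameters, yielding two positive bistochastic maps $S_{1},S_{2}\neq S$ with $S=\tfrac{1}{2}(S_{1}+S_{2})$ and contradicting extremality.

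The case $K_{S}\cong M_{3}^{s}$ is the main obstacle, because this algebra has trivial center and the block-decomposition reduction is unavailable. Here I would use that $M_{3}^{s}$ and its orthogonal complement $M_{3}^{a}$ (antisymmetric matrices) are the $(+1)$- and $(-1)$-eigenspaces of the transposition $t\colon A\mapsto A^{t}$, and since both are $S$-invariant the map $S$ commutes with $t$. Writing $S=\phi\circ E_{s}+L\circ E_{a}$, where $E_{s},E_{a}$ are the orthogonal projections onto $M_{3}^{s},M_{3}^{a}$, $\phi:=S|_{M_{3}^{s}}$ is the Jordan automorphism, and $L:=S|_{M_{3}^{a}}$ is a strict contraction, an interiority argument analogous to the one above shows that $L$ lies in the relative interior of the convex set of admissible perturbations (otherwise $K_{S}$ would strictly contain $M_{3}^{s}$), so $L=\tfrac{1}{2}(L_{1}+L_{2})$ for two distinct admissible $L_{i}$; the corresponding maps $S_{i}:=\phi\circ E_{s}+L_{i}\circ E_{a}$ are then positive bistochastic, distinct from $S$, and average to $S$. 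In the degenerate subcase $L=0$ the map $S$ coincides with $\phi\circ E_{s}=\tfrac{1}{2}(\tilde\phi+\tilde\phi\circ t)$ for any Jordan-automorphism extension $\tilde\phi$ of $\phi$ to $M_{3}$, so $S$ is manifestly the midpoint of two extremal Jordan automorphisms of $M_{3}$ and hence not extremal.
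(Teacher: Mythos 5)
Your overall skeleton (reduce to the list of seven Jordan subalgebras and kill the four intermediate ones) matches the paper, and some fragments are sound: the reduction to standard position by unitary conjugation, the use of Proposition~\ref{prop:propertiesofK}(f) and the Jordan-automorphism property of $S|_{K_{S}}$, and the degenerate subcase $L=0$ of $K_{S}\cong M_{3}^{s}$, where $S=\tfrac12(\tilde\phi+\tilde\phi\circ t)$ is indeed a manifest convex splitting. But the engine of your exclusion argument --- the claim that if the parameters of $S$ lie on the relative boundary of the positivity body $\Omega$ then $S|_{K_{S}^{\perp}}$ must admit a unit-norm isometric vector, so that extremality forces the parameters into the relative interior and hence a symmetric perturbation exists --- is unsubstantiated and in fact false. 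A concrete counterexample within your own parametrization: take $S(P_{i})=P_{i}$, $S(E_{12})=\tfrac{i}{2}E_{12}+\tfrac12 E_{21}$ (and its Hermitian conjugate relation on $E_{21}$), with the other off-diagonal blocks annihilated. The binding positivity constraint on the $(1,2)$ block is $|\alpha_{12}|+|\beta_{12}|\le 1$, which is active here, so the point sits on the relative boundary of $\Omega$; yet the restriction of $S$ to $\mathrm{span}\{E_{12},E_{21}\}$ is nilpotent, so no vector of $K_{S}^{\perp}$ has its HS-norm preserved and $K_{S}$ remains the diagonal algebra. More damningly, your mechanism uses no property that distinguishes the four forbidden algebras from the permitted algebra $\mathbb{C}P_{12}\oplus\mathbb{C}P_{3}$: the map of \eqref{eq:DefinitionOfS} has exactly the block structure your bimodule argument would produce for that case, is extremal (Lemma~\ref{lem:SIsExtremal}), hence is an extreme point of its own $\Omega$ and sits on its relative boundary, and still has no isometric vector in $K_{S}^{\perp}$. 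An argument that would equally "exclude" $\mathbb{C}P_{12}\oplus\mathbb{C}P_{3}$ cannot be correct. Even retreating to the weaker (true) implication "extremal in $\mathcal{P}(M_{3})$ $\Rightarrow$ extreme point of $\Omega$" leaves you needing a classification of the extreme points of $\Omega$ in each case, which you do not supply.

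The feature that actually separates the four forbidden algebras from $\mathbb{C}\mathbf{1}$ and $\mathbb{C}P_{12}\oplus\mathbb{C}P_{3}$ is that each of them contains three mutually orthogonal rank-one projections. The paper's proof runs through exactly this: since $S$ is a Jordan automorphism of $K_{S}$, the images $S(Q_{i})$ of those projections are again three mutually orthogonal rank-one projections; after a unitary twist one obtains an extremal bistochastic map fixing $P_{1},P_{2},P_{3}$, which by Kye's Theorem~4.1 is decomposable; an extremal decomposable bistochastic map is of the form $A\mapsto U^{*}AU$ or $A\mapsto U^{*}A^{t}U$ and therefore has $K_{S}=M_{3}$, a contradiction. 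If you want to keep your perturbative philosophy, you would have to inject this projection-fixing input (or some equivalent rigidity statement) at the point where your boundary/interior dichotomy currently does the work. Two smaller points: the bimodule identity you cite yields preservation only of the symmetrized blocks $P_{i}M_{3}P_{j}+P_{j}M_{3}P_{i}$, not of each $P_{i}M_{3}P_{j}$ separately (your parenthetical $S(E_{ij})=\alpha_{ij}E_{ij}+\beta_{ij}E_{ji}$ is consistent with this, but the phrase "preserve the block decomposition $\bigoplus_{i,j}P_{i}M_{3}P_{j}$" is not); and strictly speaking the theorem also implicitly relies on exhibiting maps realizing the three surviving cases, which your proposal does not address, though that is arguably outside the literal statement.
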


\begin{proof}
1. We proof the assertion by excluding all other possible forms of
$K_{S}$ in the first place.
Let $K_{S}$ be one of the following JB$^{*}$-algebras:
$\mathbb{C} P_{1} \oplus \mathbb{C} P_{2} \oplus \mathbb{C} P_{3}$,
$M_{2} \oplus \mathbb{C} P_{3}$,
$M_{2}^{s} \oplus P_{3}$, or $M_{3}^{s}$.
Then $K_{S}$ contains the projections $P_{1}, P_{2}, P_{3}$,
and because $S$ is a Jordan automorphism on $K_{S}$:
\begin{multline}
    \text{Tr} \, S(P_{i}) S(P_{j}) = \\
= \frac{1}{2} \text{Tr} \, \left( S(P_{i}) S(P_{j}) + S(P_{j}) S(P_{i}) \right)=
\frac{1}{2} \text{Tr} \, S(P_{i} P_{j} + P_{j} P_{i})=
\text{Tr} \, P_{i} P_{j} = \delta_{ij},
\end{multline}
where $i,j = 1,2,3$ and $\delta_{ij}$ is the Kronecker delta.  
Hence, $\left\{S(P_{i})\right\}_{i=1}^{3}$ 
is a triple of rank-one, mutually orthogonal projections.
There is a unitary matrix $U \in M_{3}$, such that
$U^{*} S(P_{i}) U = P_{i}$, for $i = 1,2,3$. 
Let us define $\tilde{S}(A) = U^{*} S(A) U$.
Then $\tilde{S}$ is an extremal 
\cite[Lemma 3.1.2b, p.\,27]{stormer2013positive}
bistochastic map such that
$\tilde{S}(P_{i}) = P_{i}$, $i = 1,2,3$.
By \cite[Theorem 4.1]{kye1995positive},
$\tilde{S}$ is decomposable,
which contradicts either the fact that $S$ is extremal
or $K_{S} \neq M_{3}$.

2.
While it is easy to see that there are extremal maps for which
$K_{S}=M_{3}$ (take e.g. $S(A) = U A U^{*}$ for a unitary matrix $U$),
or $K_{S}=\mathbb{C} \mathbf{1}$
(take the celebrated Choi map
\cite{choi1977extremal});
it is not that straightforward to provide an example of
an extremal bistochastic map which has
$K_{S}= \mathbb{C} P_{12} \oplus \mathbb{C} P_{3}$.

Let then $S\!: M_{3} \rightarrow M_{3}$ be a linear map defined as
\begin{equation}
\label{eq:DefinitionOfS}
S(A) \:=\: \begin{pmatrix}
        \frac{1}{2}(a_{11} + a_{22}) & 0 & \frac{1}{\sqrt{2}} a_{13} \\
        0 & \frac{1}{2}(a_{11} + a_{22}) & \frac{1}{\sqrt{2}} a_{32} \\
        \frac{1}{\sqrt{2}} a_{31} & \frac{1}{\sqrt{2}} a_{23} & a_{33}
        \end{pmatrix},
\end{equation}
for
$A = \left( a_{ij} \right)_{i,j=1}^{3}
        \in M_{3}$.
If we use the notation
$A = \left( \begin{smallmatrix}
    B & \vec{u} \\
    \vec{w}^{t} & z
    \end{smallmatrix} \right)$,
for $B \in M_{2}$,
$\vec{u},\vec{w} \in \mathbb{C}^{2}$ are column vectors,
and $z \in \mathbb{C}$,
the map $S$ acts by
\begin{equation}
    S(A) \:=\: S \begin{pmatrix}
    B & \vec{u} \\
    \vec{w}^{t} & z
    \end{pmatrix} \: = \:
    \begin{pmatrix}
        \frac{1}{2} (\text{Tr} B) \, \mathbf{1}_{2} &
            \frac{1}{\sqrt{2}}(\hat{P}_{1} \vec{u} + \hat{P}_{2} \vec{w}) \\
        \frac{1}{\sqrt{2}}(\hat{P}_{1} \vec{w} + \hat{P}_{2} \vec{u})^{t} & z
    \end{pmatrix},
\end{equation}
where
$\hat{P}_{1} = \left( \begin{smallmatrix} 1 & 0 \\ 0 & 0 \end{smallmatrix} \right)$,
$\hat{P}_{2} = \left( \begin{smallmatrix} 0 & 0 \\ 0 & 1 \end{smallmatrix} \right)$,
and $\mathbf{1}_{2}$ is the identity matrix of $M_{2}$.
Provided the next Lemma \ref{lem:SIsExtremal} is true, 
this example ends the proof.
\end{proof}

\begin{lemma}
\label{lem:SIsExtremal}
$S$ is an bistochastic, extremal and atomic map.
\end{lemma}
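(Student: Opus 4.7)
The plan is to verify bistochasticity by a direct check, prove extremality by a systematic kernel-constraint analysis of $T \le S$, and then deduce atomicity from extremality together with explicit negative values of the Choi quadratic forms of $S$ and $S \circ t$ on Schmidt-rank-$2$ test vectors.

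Bistochasticity is immediate: the equalities $S(\mathbf{1}) = \mathbf{1}$ and $\text{Tr}\,S(A) = \text{Tr}\,A$ are read off the formula. For positivity I would use the block representation $S(A) = \left( \begin{smallmatrix} \frac{1}{2}\text{Tr}(B)\,\mathbf{1}_{2} & \vec{v} \\ \vec{v}^{*} & a_{33} \end{smallmatrix} \right)$ with $\vec{v} = \frac{1}{\sqrt{2}}(a_{13}, \overline{a_{23}})^{t}$, apply the Schur complement, and verify the resulting inequality $a_{33}(a_{11}+a_{22}) \ge |a_{13}|^{2} + |a_{23}|^{2}$ by summing the two $2 \times 2$ principal-minor inequalities $a_{ii}a_{33} \ge |a_{i3}|^{2}$ that come from $A \ge 0$.

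For extremality I would take a positive $T$ with $0 \le T \le S$ and repeatedly exploit the principle that whenever $\xi^{*} S(P_{\eta}) \xi = 0$, the inequality $0 \le T(P_{\eta}) \le S(P_{\eta})$ forces $T(P_{\eta})\xi = 0$. The comparison $T(E_{33}) \le E_{33}$ first yields $T(E_{33}) = \lambda E_{33}$ for some $\lambda \in [0, 1]$, and the goal is to show $T(E_{ij}) = \lambda S(E_{ij})$ on every matrix unit $E_{ij}$. Choosing $\eta = e_{i}$ ($i = 1, 2$) or $\eta = (1, y, 0)$ places $T(E_{11})$, $T(E_{22})$ and $T(E_{12})$ into the upper-left $2 \times 2$ block, because in each of these cases $e_{3}$ lies in $\ker S(P_{\eta})$. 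The one-parameter choices $\eta = (1, 0, z)$ and $\eta = (0, 1, z)$, whose $\ker S(P_{\eta})$ is computed explicitly, then pin down $T(E_{13})$ and $T(E_{23})$ modulo a single off-diagonal entry of $T(E_{11})$ or $T(E_{22})$. Finally, the generic $\eta = (x, y, z)$ with $xyz \ne 0$ and $\xi = (-\sqrt{2}x\overline{z}, -\sqrt{2}z\overline{y}, |x|^{2} + |y|^{2})$ converts $T(P_{\eta})\xi = 0$ into a polynomial identity in the six independent variables $x, \overline{x}, y, \overline{y}, z, \overline{z}$; matching coefficients of distinct monomials eliminates every residual degree of freedom and yields $T = \lambda S$.

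For atomicity I would suppose $S = \Lambda_{1} + \Lambda_{2}$ with $\Lambda_{1}$ $2$-positive and $\Lambda_{2}$ $2$-copositive. Both summands are positive and dominated by $S$, so extremality gives $\Lambda_{1} = \alpha S$ and $\Lambda_{2} = (1-\alpha) S$; it will therefore be enough to show that $S$ is neither $2$-positive nor $2$-copositive, which would force the incompatible $\alpha = 0$ and $\alpha = 1$. A direct inspection of the Choi matrix $C_{S} = \sum_{i,k} E_{ik} \otimes S(E_{ik})$ shows that its $\{(2,3),(3,2)\}$-submatrix equals $\left( \begin{smallmatrix} 0 & 1/\sqrt{2} \\ 1/\sqrt{2} & 0 \end{smallmatrix} \right)$, so the test vector $\phi \in \mathbb{C}^{3} \otimes \mathbb{C}^{3}$ corresponding to the rank-$2$ matrix $E_{23} - E_{32}$ --- which has Schmidt rank $2$ --- makes $\phi^{*} C_{S} \phi$ negative, refuting $2$-positivity; the analogous computation for $C_{S \circ t}$ with the vector built from $E_{13} - E_{31}$ refutes $2$-copositivity. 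The main obstacle will be the last stage of the extremality argument: carefully book-keeping the polynomial identity in six variables and verifying that all remaining off-diagonal parameters of $T(E_{11}), T(E_{22}), T(E_{12}), T(E_{13}), T(E_{23})$ vanish simultaneously.
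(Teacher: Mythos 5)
Your plan follows essentially the same route as the paper's proof. The positivity check via the Schur complement and the minor inequalities $a_{ii}a_{33}\ge|a_{i3}|^2$ is the paper's argument in slightly more general form, and your extremality scheme --- forcing $T(P_\eta)\xi=0$ for $\xi\in\ker S(P_\eta)$, starting from $T(E_{33})=\lambda E_{33}$ and the invariance of the upper-left $M_2$ block --- uses exactly the kernel vector $(-\sqrt{2}x\overline{z},-\sqrt{2}z\overline{y},|x|^2+|y|^2)$ that the paper writes as $(-2\vec{\upsilon},\|\vec{\eta}\|^2)$; the remaining coefficient-matching is the same polarization bookkeeping the paper carries out with $\vec{\eta}\mapsto i\vec{\eta}$ (just make sure to dispose of the degenerate case $\lambda=0$ separately, as the paper does). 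The only genuine divergence is the atomicity certificate: you refute $2$-positivity and $2$-copositivity by exhibiting Schmidt-rank-$2$ vectors on which the Choi quadratic forms of $S$ and $S\circ t$ are negative (your $2\times2$ submatrices are computed correctly), whereas the paper exhibits explicit failures of the Kadison--Schwarz inequality at $B=P_{12}+E_{32}$ and $B=P_{12}+E_{31}$; the two certificates are equivalent in strength and equally short, yours requiring the standard characterization of $k$-positivity by Schmidt-rank-$k$ test vectors, the paper's requiring Choi's result that unital $2$-positive maps satisfy Kadison--Schwarz.
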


\begin{proof}
In order to prove that $S$ is a positive map,
it is enough to show that for any $\eta \in \mathbb{C}^{3}\backslash\{0\}$, 
$SP_{\eta} \geq 0$,
where $P_{\eta}$ is a rank-one operator,
$P_{\eta} = \eta \eta^{*}$.
Let us then take
$\eta = (\eta_{1}, \eta_{2}, \eta_{3}) = (\vec{\eta}, \eta_{3})$,
$\vec{\eta} = (\eta_{1}, \eta_{2}) \in \mathbb{C}^{2}\backslash\{0\}$,
$\eta_{3} \in \mathbb{C}$.
We have then
\begin{multline}
 SP_{\eta} \:=\:  S \begin{pmatrix}
    \vec{\eta} \vec{\eta}^{\,*} & \overline{\eta}_{3} \vec{\eta} \\
    \eta_{3} \vec{\eta}^{\,*}   & |\eta_{3}|^{2}
 \end{pmatrix} \: = \: \\
 = \begin{pmatrix}
  \frac{||\vec{\eta}||^{2}}{2} \mathbf{1}_{2} &
        \frac{1}{\sqrt{2}} \left ( \overline{\eta}_{3} \hat{P}_{1} \vec{\eta} +
          \eta_{3} \hat{P}_{2} \overline{\vec{\eta}} \right) \\
\frac{1}{\sqrt{2}} \left ( \overline{\eta}_{3} \hat{P}_{1} \vec{\eta} +
          \eta_{3} \hat{P}_{2} \overline{\vec{\eta}} \right)^{*} &
        |\eta_{3}|^{2}
 \end{pmatrix}.
\end{multline}
If $\eta_{3} = 0$, then of course $SP_{\eta} \geq 0$.
In the case when $\eta_{3} \neq 0$,
taking the Schur complement
(see \cite[Theorem 1.12, p.34]{zhang2006schur}),
we have that $SP_{\eta} \geq 0$, if and only if
\begin{equation}
\label{ieq:SchurForS}
   \left ( \overline{\eta}_{3} \hat{P}_{1} \vec{\eta} +
    \eta_{3} \hat{P}_{2} \overline{\vec{\eta}} \right)
   \left ( \overline{\eta}_{3} \hat{P}_{1} \vec{\eta} +
    \eta_{3} \hat{P}_{2} \overline{\vec{\eta}} \right)^{*} 
    \: \leq \:
        |\eta_{3}|^{2} \, ||\vec{\eta}||^{2} \, \mathbf{1}_{2},
\end{equation}
but it is easy to see that this inequality is fulfilled:
\begin{multline}
   \left ( \overline{\eta}_{3} \hat{P}_{1} \vec{\eta} +
    \eta_{3} \hat{P}_{2} \overline{\vec{\eta}} \right)
   \left ( \overline{\eta}_{3} \hat{P}_{1} \vec{\eta} +
    \eta_{3} \hat{P}_{2} \overline{\vec{\eta}} \right)^{*} \: \leq \:
|| \overline{\eta}_{3} \hat{P}_{1} \vec{\eta} +
    \eta_{3} \hat{P}_{2}\overline{\vec{\eta}} ||^{2} \, \mathbf{1}_{2}
    \: = \: \\
\left( || \overline{\eta}_{3} \hat{P}_{1} \vec{\eta} ||^{2} +
    || \eta_{3} \hat{P}_{2} \overline{\vec{\eta}} ||^{2} \right) \, \mathbf{1}_{2} 
    \: = \:
|\eta_{3}|^{2} \, ||\vec{\eta}||^{2} \, \mathbf{1}_{2}.
\end{multline}
Thus, $S$ is a bistochastic map.
We also have
$K_{S}= \mathbb{C} P_{12} \oplus \mathbb{C} P_{3}$.
Indeed,
one computes immediately that for $A \in M_{3}$:
\begin{equation}
\lim \limits_{k \rightarrow \infty} S^{k}(A) \:=\:
\frac{1}{2} (\text{Tr} \, P_{12} A )\, P_{12} +
    (\text{Tr} \, P_{3} A)\, P_{3} \, \in K_{S}.
\end{equation}
From that and from the definition \eqref{def:definitionofK},
follows the particular form of the JB$^{*}$-algebra $K_{S}$.

We are going to show now that $S$ is extremal.
Let $S_{0}: M_{3} \rightarrow M_{3}$ be a positive map such that
$0 \leq S_{0} \leq S$.
We have $S_{0}(P_{3}) \leq S(P_{3}) = P_{3}$,
and hence $S_{0}(P_{3}) = \alpha P_{3}$ for some $0 \leq \alpha \leq 1$.
Let us consider the space $M_{2}$ embedded into $M_{3}$ as follows:
$A \in M_{2} \! \subset \! M_{3}$ whenever
\begin{equation}
\label{RandomLabel:450031}
    A \: = \: \begin{pmatrix}
    a_{11} & a_{12} & 0 \\
    a_{21} & a_{22} & 0 \\
    0 & 0 & 0
    \end{pmatrix}, \quad
    a_{ij} \in \mathbb{C}, \,\, i,j = 1,2.
\end{equation}
We want to show that $S_{0}(A) \in M_{2} \! \subset \! M_{3}$
for any $A \in M_{2} \! \subset \! M_{3}$.
Let $B \in M_{2}$ and suppose at first that $B \geq 0$.
Then
\begin{equation}
\label{eq:SMapsM2intoM2}
 0 \:\leq\: S_{0} \begin{pmatrix}
               B & \vec{0} \\ \vec{0}^{t} & 0
              \end{pmatrix} \: = \:
 \begin{pmatrix}
  \hat{S}_{0}(B) & \vec{u} \\ \vec{w}^{t} & r
 \end{pmatrix},
\end{equation}
for some vectors $\vec{u}, \vec{w} \in \mathbb{C}^{2}$ and $r \geq 0$.
The map $\hat{S}_{0}:B \mapsto \hat{S}_{0}(B) \in M_{2}$ must be a positive
map of $M_{2}$ such that $\hat{S}_{0}(B) \leq \frac{1}{2} (\text{Tr} B) \mathbf{1}_{2}$.
On the other hand,
\begin{equation}
 0 \: \leq \:
 \begin{pmatrix}
  \hat{S}_{0}(B) & \vec{u} \\ \vec{w}^{t} & r
 \end{pmatrix} \: \leq \:
            S \begin{pmatrix}
               B & \vec{0} \\ \vec{0}^{t} & 0
              \end{pmatrix} \: = \:
 \begin{pmatrix}
  \frac{1}{2} (\text{Tr} B) \, \mathbf{1}_{2} & \vec{0} \\ \vec{0}^{t} & 0
 \end{pmatrix},
\end{equation}
and hence $r=0$, $\vec{u}=\vec{w}=\vec{0}$.
Because any matrix $B \in M_{2}$ is a complex combination of four positive
matrices, we have that indeed $S_{0}(A) \in M_{2} \! \subset \! M_{3}$
for any $A \in M_{2} \! \subset \! M_{3}$.

Let $\{e_{i}\}_{i=1}^{3}$ be the standard orthonormal basis of $\mathbb{C}^{3}$
and $\{ E_{jk} \}_{j,k=1}^{3}$ be the set of matrix units in $M_{3}$,
$E_{jk} = e_{j} e_{k}^{*}$.
We show that for $i = 1,2,3$ and $j=1,2$;
$\langle e_{i}, S_{0}(E_{j3}) e_{i} \rangle = 0$,
and $\langle e_{1}, S_{0}(E_{j3}) e_{2} \rangle = 
    \langle e_{2}, S_{0}(E_{j3}) e_{1} \rangle = 0$.
Fix $i,j$, and take 
$X = |z_{1}|^{2} P_{j} + z_{1} \overline{z_{2}} E_{j3} +
\overline{z_{1}} z_{2} E_{3j} + |z_{2}|^{2} P_{3}$,
for some $z_{1}, z_{2} \in \mathbb{C}$.
It is evident that $X \geq 0$. 
Hence, since $S_{0}(E_{j3}) = S_{0}(E_{3j})^{*}$,
\begin{equation}
0 \: \leq \: \langle e_{i}, S_{0}(X) e_{i} \rangle \: = \:
|z_{1}|^{2} \, \delta_{ij} +
2 \, \text{Re} \, z_{1} \overline{z_{2}} \,
    \langle e_{i}, S_{0}(E_{j3}) e_{i} \rangle +
|z_{2}|^{2} \, \delta_{i3},
\end{equation}
where $\delta_{ij}$ is the Kronecker delta.
Because the above equation is true for every $z_{1}, z_{2} \in \mathbb{C}$,
and $j \neq 3$,
it follows that necessarily 
$\langle e_{i}, S_{0}(E_{j3}) e_{i} \rangle =0$. 
Now, since $X \geq 0$, then so is $Y = P_{12} \, S_{0}(X) P_{12}$.
From what has been just shown, 
$Y_{jj} = |z_{1}|^{2}$,
but the second diagonal element of $Y$ equals 0, and
hence $Y_{12} = Y_{21} = 0$,
which means that
$\langle e_{1}, S_{0}(E_{j3}) e_{2} \rangle
    = \langle e_{2}, S_{0}(E_{j3}) e_{1} \rangle = 0$.

Therefore, we can write that for any
$A  =   \left( \begin{smallmatrix}
    B & \vec{u} \\
    \vec{w}^{t} & z
    \end{smallmatrix} \right) \in M_{3}$,
\begin{equation}
    S_{0}(A) \:=\: S_{0} \begin{pmatrix}
    B & \vec{u} \\
    \vec{w}^{t} & z
    \end{pmatrix} \: = \:
    \begin{pmatrix}
        \hat{S}_{0}(B) & S_{1} \vec{u} + S_{2} \vec{w} \\
        (\overline{S}_{1} \vec{w} + \overline{S}_{2} \vec{u})^{t} & \alpha z
    \end{pmatrix},
\end{equation}
and $S_{1}, S_{2} \in M_{2}$, with matrix elements given by
\begin{equation}
(S_{1})_{ij} = \langle e_{i}, S(E_{j3}) e_{3} \rangle, \quad \quad
(S_{2})_{ij} = \langle e_{i}, S(E_{3j}) e_{3} \rangle.
\end{equation}

Suppose now that $\alpha = 0$.
Then, since $S_{0} P_{\eta} \geq 0$ for every
$\eta \in \mathbb{C}^{3}\backslash\{0\}$,
we have that $S_{1} = S_{2} = 0$.
Because 
$0 \leq (S - S_{0}) P(1,1,1)$,
\begin{equation}
0 \: \leq \: \hat{S}_{0} \begin{pmatrix}
    1 & 1 \\ 1 & 1
    \end{pmatrix} \: \leq \:
    \mathbf{1}_{2} - \frac{1}{2} \begin{pmatrix}
                1 & 1 \\ 1 & 1
                \end{pmatrix} \: = \:
    \frac{1}{2}
    \begin{pmatrix}
    1 & -1 \\ -1 & 1
    \end{pmatrix},
\end{equation}
i.e. for some $\beta \geq 0$, 
$
\hat{S}_{0} \left( \begin{smallmatrix}
    1 & 1 \\ 1 & 1
    \end{smallmatrix} \right) =
    \beta \left(
    \begin{smallmatrix}
    1 & -1 \\ -1 & 1
    \end{smallmatrix} \right)
$.
Repeating the same calculation, but this time for $P(i,i,1)$,
we obtain that for some $\beta' \geq 0$, 
$
\hat{S}_{0} \left( \begin{smallmatrix}
    1 & 1 \\ 1 & 1
    \end{smallmatrix} \right) =
    \beta' \left(
    \begin{smallmatrix}
    1 & 1 \\ 1 & 1
    \end{smallmatrix} \right)
$,
and so 
$
\hat{S}_{0} \left( \begin{smallmatrix}
    1 & 1 \\ 1 & 1
    \end{smallmatrix} \right) = 0
$.
Similarly,
$
\hat{S}_{0} \left( \begin{smallmatrix}
    1 & -1 \\ -1 & 1
    \end{smallmatrix} \right) = 0
$, and hence
$\hat{S}_{0}(\mathbf{1}_{2}) = 0$, i.e. $\hat{S}_{0} = 0$.
Therefore,  we can assume in the following that $\alpha > 0$.

Next, by direct computation, one proves that for any
$\vec{\eta} \in \mathbb{C}^{2}$:
\begin{equation}
 \label{eq:ZeroTrace}
 \text{Tr} \, P(-2\vec{\upsilon}, ||\vec{\eta}||^{2}) \, S P(\vec{\eta},1)
     \:=\: 0,
\end{equation}
where $\vec{\upsilon} =
 \frac{1}{\sqrt{2}} \left(
  \hat{P}_{1} \vec{\eta} + \hat{P}_{2} \overline{\vec{\eta}}
 \right)$,
$||\vec{\upsilon}||^{2} = \tfrac{||\vec{\eta}||^{2}}{2}$.
Because $0 \leq S_{0} \leq S$, the eq.
\eqref{eq:ZeroTrace} holds also for $S_{0}$.
Writing out the equation, and assuming from now on that
$||\vec{\eta}|| = 1$,
we get
\begin{equation}
 \label{eq:ZeroTraceExplicit}
 4 \, \vec{\upsilon}^{\,*} \hat{S}_{0}(\vec{\eta} \vec{\eta}^{\,*}) \vec{\upsilon} -
 4 \, \text{Re} \, \vec{\upsilon}^{\,*} \vec{\upsilon}_{0} + \alpha \: = \: 0,
\end{equation}
where $\vec{\upsilon_{0}} = S_{1} \vec{\eta} + S_{2} \overline{\vec{\eta}}$.
To keep the notation simple,
we remember that $\vec{\upsilon}$ and $\vec{\upsilon}_{0}$ depend on $\vec{\eta}$,
without signifying it explicitly.

Since $S_{0} P(\vec{\eta},1)$ is a positive matrix, 
using again the Schur complement,
$\vec{\upsilon}_{0} \vec{\upsilon}_{0}^{\,*} \leq  \alpha \hat{S}_{0}(\vec{\eta} \vec{\eta}^{\,*})$,
and thus
$|\vec{\upsilon}^{\,*} \vec{\upsilon}_{0}|^{2} \leq
  \alpha \vec{\upsilon}^{\,*} \hat{S}_{0}(\vec{\eta} \vec{\eta}^{\,*}) \vec{\upsilon}$.
Therefore
\begin{equation}
4 \left( \text{Im} \, \vec{\upsilon}^{\,*} \vec{\upsilon}_{0} \right)^{2} +
\left( 2 \, \text{Re} \, \vec{\upsilon}^{\,*} \vec{\upsilon}_{0} - \alpha \right)^{2} 
 \: \leq \:
4 \alpha \, \vec{\upsilon}^{\,*} \hat{S}_{0}(\vec{\eta} \vec{\eta}^{\,*}) \vec{\upsilon}
    - 4 \alpha \, \text{Re} \, \vec{\upsilon}^{\,*} \vec{\upsilon}_{0} + \alpha^{2}  = 0,
\end{equation}
i.e. $\vec{\upsilon}^{\,*} \vec{\upsilon}_{0} = \frac{\alpha}{2}$.
Substituting both $\vec{\upsilon}$ and $\vec{\upsilon}_{0}$,
we obtain:
\begin{equation}
\label{eq:ComplicatedForEtaPlus}
\vec{\eta}^{\,*} \left( \hat{P}_{1} S_{1} + S_{2}^{t} \hat{P}_{2} \right) \vec{\eta}
    + \vec{\eta}^{\,*} \hat{P}_{1} S_{2} \overline{\vec{\eta}} 
    + \vec{\eta}^{\,t} \hat{P}_{2} S_{1} \vec{\eta} \: = \: \frac{\alpha}{\sqrt{2}}.
\end{equation}
Because this holds true for any $\vec{\eta} \in \mathbb{C}^{2}$,
we can repeat the whole argument,
but this time changing $\vec{\eta} \mapsto i \vec{\eta}$,
to obtain:
\begin{equation}
\label{eq:ComplicatedForEtaMinus}
\vec{\eta}^{\,*} \left( \hat{P}_{1} S_{1} + S_{2}^{t} \hat{P}_{2} \right) \vec{\eta}
    - \vec{\eta}^{\,*} \hat{P}_{1} S_{2} \overline{\vec{\eta}}
    - \vec{\eta}^{\,t} \hat{P}_{2} S_{1} \vec{\eta} \: = \: \frac{\alpha}{\sqrt{2}}.
\end{equation}
Adding together \eqref{eq:ComplicatedForEtaPlus} and \eqref{eq:ComplicatedForEtaMinus},
we have
$
\vec{\eta}^{\,*} \left( \hat{P}_{1} S_{1} + S_{2}^{t} \hat{P}_{2} \right) \vec{\eta} = 
    \frac{\alpha}{\sqrt{2}},
$
for any $\vec{\eta}$. 
Hence 
\begin{equation}
\label{eq:S1PlusS2EqualsOne}
\hat{P}_{1} S_{1} + S_{2}^{t} \hat{P}_{2} \: = \:
     \frac{\alpha}{\sqrt{2}} \mathbf{1}_{2}.
\end{equation}
Subtracting \eqref{eq:ComplicatedForEtaMinus} from \eqref{eq:ComplicatedForEtaPlus},
we get
\begin{equation}
    \vec{\eta}^{\,*} \hat{P}_{1} S_{2} \overline{\vec{\eta}} 
    + \vec{\eta}^{\,t} \hat{P}_{2} S_{1} \vec{\eta} \: = \: 0.
\end{equation}
Let  $\vec{\eta}$ be each of the following vectors:
$(1,0)$, $(0,1)$,
$(i,0)$, $(0,i)$,
$(\frac{1}{\sqrt{2}},\frac{1}{\sqrt{2}})$, 
$(\frac{1}{\sqrt{2}},\frac{i}{\sqrt{2}})$,
then we can see that $\hat{P}_{1} S_{2} = \hat{P}_{2} S_{1} = 0$.
Combining this with  \eqref{eq:S1PlusS2EqualsOne}, 
we obtain
\begin{equation}
S_{1} = \begin{pmatrix}
    \frac{\alpha}{\sqrt{2}} & z_{0} \\ 0 & 0 
\end{pmatrix}, \quad
S_{2} = \begin{pmatrix}
     0 & 0 \\ - z_{0} & \frac{\alpha}{\sqrt{2}} 
\end{pmatrix}, \quad
z_{0} \in \mathbb{C}.
\end{equation}

Putting $\vec{\upsilon}^{\,*} \vec{\upsilon}_{0} = \frac{\alpha}{2}$
into \eqref{eq:ZeroTraceExplicit}, we get that
$
\vec{\upsilon}^{\,*} \hat{S}_{0}(\vec{\eta} \vec{\eta}^{\,*}) \vec{\upsilon} = 
\frac{\alpha}{4}
$.
Hence 
\begin{equation}
 \vec{\upsilon}^{\,*} \left(
    \alpha \, \hat{S}_{0}(\vec{\eta} \vec{\eta}^{\,*}) -\vec{\upsilon}_{0} \vec{\upsilon}_{0}^{\,*}
  \right) \vec{\upsilon} \: = \: 0,
\end{equation}
and since 
$\vec{\upsilon}_{0} \vec{\upsilon}_{0}^{\,*} \leq  \alpha \hat{S}_{0}(\vec{\eta} \vec{\eta}^{\,*})$,
it must be that
$ \alpha \, \hat{S}_{0}(\vec{\eta} \vec{\eta}^{\,*}) \vec{\upsilon} =
\left( \vec{\upsilon}_{0}^{\,*} \vec{\upsilon} \right)  \vec{\upsilon}_{0}$,
or simply
$ \hat{S}_{0}(\vec{\eta} \vec{\eta}^{\,*}) \vec{\upsilon} = \frac{1}{2} \vec{\upsilon}_{0}$.
Again, 
we can repeat the whole argument,
changing $\vec{\eta} \mapsto i \vec{\eta}$,
adding the obtained result to and subtracting from the previous one, 
and we arrive at:
\begin{subequations}
    \begin{align}
\label{eq:SZeroIsAlmostOneA}
\eta_{1} \, \hat{S}_{0}(\vec{\eta} \vec{\eta}^{\,*}) \vec{e}_{1} \: = \: 
   \left( \frac{\alpha}{2} \eta_{1} + \frac{\sqrt{2}}{2} z_{0} \eta_{2} \right) \, \vec{e}_{1}, \\
\label{eq:SZeroIsAlmostOneB}
\overline{\eta}_{2} \, \hat{S}_{0}(\vec{\eta} \vec{\eta}^{\,*}) \vec{e}_{2} \: = \: 
   \left(- \frac{\sqrt{2}}{2} z_{0} \overline{\eta}_{1} + \frac{\alpha}{2} \overline{\eta}_{2} \right) 
       \, \vec{e}_{2},
    \end{align}
\end{subequations}
where $\vec{\eta} = (\eta_{1}, \eta_{2})$, and 
$\vec{e}_{1} = (1,0), \vec{e}_{2} = (0,1) \in \mathbb{C}^{2}$.
Taking e.g. \eqref{eq:SZeroIsAlmostOneA}
and multiplying it by $\overline{\eta}_{1} \vec{e}_{1}^{\,*}$,
we have
\begin{equation}
0 \: \leq \:
 |\eta_{1}|^{2} \, \vec{e}_{1}^{\,*} \, \hat{S}_{0}(\vec{\eta} \vec{\eta}^{\,*}) \vec{e}_{1} 
 \: = \: 
\frac{\alpha}{2} |\eta_{1}|^{2} + \frac{\sqrt{2}}{2} z_{0} \overline{\eta}_{1} \eta_{2},
\end{equation}
for every $\vec{\eta} \in \mathbb{C}^{2}$, $||\vec{\eta}|| =1$. 
Therefore $z_{0} = 0$, and 
$S_{1} = \frac{\alpha}{2} \hat{P}_{1}$,
$S_{2} = \frac{\alpha}{2} \hat{P}_{2}$.
As a consequence of 
\eqref{eq:SZeroIsAlmostOneA} and \eqref{eq:SZeroIsAlmostOneB},
$\hat{S}_{0} (\vec{\eta} \vec{\eta}^{\,*}) = \frac{\alpha}{2} \mathbf{1}_{2}$
for every $\vec{\eta}$ such that $\eta_{1} \neq 0$ and $\eta_{2} \neq 0$.
For $0 \!<\!\epsilon \!<\! 1$,
take $\vec{\eta}_{\epsilon} = (\sqrt{1 - \epsilon^{2}}, \epsilon)$.
Then $\hat{S}_{0} (\hat{P}_{1}) = 
\lim_{\epsilon \rightarrow 0} \hat{S}_{0}
    (\vec{\eta}_{\epsilon} \vec{\eta}_{\epsilon}^{\,*}) =
\frac{\alpha}{2} \mathbf{1}_{2}$. 
Similarly,
$\hat{S}_{0} (\hat{P}_{2}) = \frac{\alpha}{2} \mathbf{1}_{2}$,
which results in 
$\hat{S}_{0} (\vec{\eta} \vec{\eta}^{\,*}) = \frac{\alpha}{2} \mathbf{1}_{2}$
for every $\vec{\eta} \in \mathbb{C}^{2}$,
$||\vec{\eta}|| = 1$.
This is sufficient to say that 
$\hat{S}_{0}(B) = \frac{\alpha}{2} (\text{Tr} B) \mathbf{1}_{2}$
for any $B \in M_{2}$.
We have shown that for an arbitrary positive map such that
$0 \leq S_{0} \leq S$, 
$S_{0} = \alpha S$ for $0 \leq \alpha \leq 1$,
which means that $S$ is an extremal positive map of $M_{3}$.

It is of interest to note that $S$ is not 2-positive.
Indeed,
we can prove even more and say that 
the map $S$ does not fulfil the Kadison-Schwarz inequality 
for any matrix $B \in M_{3}$
(see Prop.\,4.1 of \cite{choi1980some}).
Take the matrix
$B = P_{12} + E_{32}$.
Then it is easy to verify that 
$S(B^{*} B) - S(B)^{*} S(B)$
is not positive.
Obviously, the map $S$ is
not 2-\emph{co}positive either:
to see this, take $B = P_{12} + E_{31}$.
That, together with the fact that $S$ is extremal,
makes the map atomic, and ends the proof.
\end{proof}

\begin{remark*}
In order to prove extremality of $S$,
we showed that $S_{0} = \alpha S$,
for every $0 \leq S_{0} \leq S$.
But a only weaker assumption is in fact needed.
Indeed, let us assume that $S_{0} \in \mathcal{P}(M_{3})$ and
$\text{Tr} P_{\xi} \, S_{0}(P_{\eta}) = 0$,
for every $\mathbb{C}^{3} \ni \xi,\eta \neq 0$
such that $\text{Tr} P_{\xi} \, S(P_{\eta}) = 0$.
Then $\text{Tr} P_{12} \, S_{0}(P_{3}) = 0$ and
$\text{Tr} P_{3} \, S_{0}(P_{12}) = 0$.
Because $S_{0}$ is positive, so $S_{0}(P_{3}) = \alpha P_{3}$
and $S_{0}(P_{12}) \in M_{2} \! \subset \! M_{3}$.
Then the rest of the proof goes exactly the same as in 
Lemma \ref{lem:SIsExtremal}, 
beginning with \eqref{eq:SMapsM2intoM2} onward.
This proofs that the map $S$
is not only extremal, but also \emph{exposed}.
\end{remark*}

For the map $S$ specified in \eqref{eq:DefinitionOfS}, we have
$K_{S}= \mathbb{C} P_{12} \oplus \mathbb{C} P_{3}$.
To the best of the authors' knowledge,
all the examples of the extremal maps of $M_{3}$
that are not completely or co-completely positive,
which have been so far specified in the literature,
have $K_{S} = \mathbb{C} \, \mathbf{1}$.
The map $S$ would be the first one
with a two-dimensional commutative stable algebra.

\vspace{0.5cm}

Given a positive map $S: M_{n} \rightarrow M_{n}$,
the entanglement witness associated with $S$ is a matrix $W_{S}$
of the tensor matrix algebra $M_{n^{2}} = M_{n} \! \otimes \! M_{n}$,
defined by
\begin{equation}
\label{def:entanglement-witness}
    W_{S} = \sum \limits_{i,j =1}^{n} E_{ij} \otimes S(E_{ij}),
\end{equation}
where $\{ E_{ij}\}_{i,j=1}^{n}$ are standard matrix units.
The theorem by Choi and Jamiołkowski \cite{choi1975completely,jamiolkowski1974effective}
states that the matrix $W_{S}$ is a positive element of $M_{n^{2}}$,
if and only if $S$ is completely positive.
Therefore, for a positive, non-completely positive map $S$
there is at least one density matrix $\rho \in M_{n^{2}}$
such that $\text{Tr}\, W_{S} \rho < 0$.
This density matrix cannot be separable \cite{werner1989quantum},
and hence we say that the entanglement witness \emph{detects} the
entangled state $\rho$. 

It would be of interest to provide the explicit form of the entanglement
witness associated with the map $S$ of eq. \eqref{eq:DefinitionOfS},
and the family of states on the composite quantum system space
$M_{9} = M_{3} \! \otimes \! M_{3}$, detected by $S$.
It is easy to see, that in this case:
\begin{equation}
\label{WS-C}
 W_S =  \left( \begin{array}{ccc|ccc|ccc}
 \frac{1}{2} &  \cdot& \cdot& \cdot& \cdot& \cdot& \cdot& \cdot& \frac{1}{\sqrt{2}} \\
 \cdot& \frac{1}{2} &\cdot& \cdot& \cdot& \cdot& \cdot& \cdot& \cdot\\
 \cdot& \cdot& \cdot & \cdot& \cdot& \cdot& \cdot& \cdot& \cdot  \\ \hline
 \cdot& \cdot& \cdot& \frac{1}{2} & \cdot& \cdot& \cdot& \cdot&  \cdot \\
 \cdot& \cdot& \cdot& \cdot& \frac{1}{2} & \cdot& \cdot& \cdot&  \cdot \\
 \cdot& \cdot& \cdot& \cdot& \cdot& \cdot& \cdot & \frac{1}{\sqrt{2}}& \cdot  \\ \hline
 \cdot& \cdot& \cdot & \cdot& \cdot& \cdot& \cdot& \cdot& \cdot  \\ 
 \cdot & \cdot& \cdot& \cdot& \cdot& \frac{1}{\sqrt{2}}& \cdot& \cdot& \cdot \\
 \frac{1}{\sqrt{2}}& \cdot& \cdot& \cdot& \cdot & \cdot& \cdot& \cdot& 1
  \end{array} \right),
\end{equation}
where dots mean matrix elements equal to zero.
$W_{S}$ is not a positive matrix of $M_{9} = M_{3} \! \otimes \! M_{3}$.
Let $U \in \text{U}(9)$ be a unitary matrix such that
$U^{*} W_{S} U = W_{S}^{(2)} \oplus W_{S}^{(7)}$,
where
\begin{equation}
\label{WS-DirectSum}
 W_{S}^{(2)} = \left(\begin{array}{cc}
    0 & \frac{1}{\sqrt{2}}  \\ \frac{1}{\sqrt{2}}  & 0
  \end{array}\right) \in M_{2}, \quad
 W_S^{(7)}\ = \  \left( \begin{array}{ccccccc}
 \frac{1}{2} &  \cdot& \cdot& \cdot& \cdot& \cdot& \frac{1}{\sqrt{2}} \\
 \cdot& \frac{1}{2} &\cdot& \cdot& \cdot& \cdot& \cdot\\
 \cdot& \cdot& \cdot & \cdot& \cdot& \cdot& \cdot \\ 
 \cdot& \cdot& \cdot& \frac{1}{2} & \cdot&  \cdot& \cdot \\
 \cdot& \cdot& \cdot& \cdot& \frac{1}{2} & \cdot& \cdot \\
 \cdot& \cdot& \cdot & \cdot& \cdot& \cdot& \cdot  \\ 
 \frac{1}{\sqrt{2}}& \cdot& \cdot& \cdot& \cdot& \cdot& 1
  \end{array} \right) \in M_{7}.
\end{equation}
It is evident that $W_{S}^{(7)} \geq 0$. 
Let also $\vec{v}$ be one of the eigenvectors of $W_{S}^{(2)}$: 
$\vec{v} = \frac{1}{\sqrt{2}} (1,-1)^{t} \in \mathbb{C}^{2}$,
$W_{S}^{(2)} \vec{v} = - \frac{1}{\sqrt{2}} \vec{v}$, and
$P_{\vec{v}} = \vec{v} \vec{v}^{*}$
be the orthogonal projection onto the space spanned by $\vec{v}$.
We define
$\rho = \frac{1}{2} U ( P_{\vec{v}} \oplus \rho_{0} ) U^{*}$,
for a density matrix $\rho_{0} \in M_{7}$, 
$\rho_{0} \geq 0$, $\text{Tr} \rho_{0} = 1$, 
such that
$\text{Tr}\, W_{S}^{(7)} \rho_{0} < \frac{1}{\sqrt{2}}$.
Then $\rho \geq 0$ and $\text{Tr} \rho = 1$,
i.e. $\rho$ is a density matrix.
Moreover,
\begin{multline}
\label{RandomLabel:851252}
    \text{Tr}\, W_{S} \rho =
\frac{1}{2} \text{Tr}\, W_{S} U (P_{\vec{v}} \oplus \rho_{0}) U^{*} =
\frac{1}{2} \text{Tr}\, (W_{S}^{(2)} \oplus W_{S}^{(7)}) (P_{\vec{v}} \oplus \rho_{0}) = \\ =
\frac{1}{2} \text{Tr}\, W_{S}^{(2)} P_{\vec{v}} + \frac{1}{2} \text{Tr}\, W_{S}^{(7)} \rho_{0}
< 0,
\end{multline}
which means that $W_{S}$ detects the entangled state $\rho$.
In particular, the state $\rho$ given by
\begin{equation}
\label{PPTstate}
 \rho =  \frac{1}{7} \left( \begin{array}{ccc|ccc|ccc}
 1 &  \cdot& \cdot& \cdot& \cdot& \cdot& \cdot& \cdot& -1 \\
 \cdot& \cdot& \cdot& \cdot& \cdot& \cdot& \cdot& \cdot& \cdot\\
 \cdot& \cdot& 1 & \cdot& \cdot& \cdot& \cdot& \cdot& \cdot  \\ \hline
 \cdot& \cdot& \cdot& \cdot& \cdot& \cdot& \cdot& \cdot&  \cdot \\
 \cdot& \cdot& \cdot& \cdot& 1 & \cdot& \cdot& \cdot&  \cdot \\
 \cdot& \cdot& \cdot& \cdot& \cdot& 1 & \cdot & -1 & \cdot  \\ \hline
 \cdot& \cdot& \cdot & \cdot& \cdot& \cdot& 1& \cdot& \cdot  \\ 
 \cdot & \cdot& \cdot& \cdot& \cdot& -1 & \cdot& 1 & \cdot \\
 -1 & \cdot& \cdot& \cdot& \cdot & \cdot& \cdot& \cdot& 1
  \end{array} \right)
\end{equation}
is a PPT state, meaning that $(I \otimes t)\rho$ is still a density matrix,
but $\text{Tr}\, W_{S} \rho = \frac{2}{7} - \frac{2\sqrt{2}}{7} < 0$.

\paragraph{Acknowledgements.}
The authors would like to express their gratitude to Prof. Erling Størmer
for rightfully pointing out that the statement of Theorem \ref{thm:FromESbook}
is a direct consequence of Proposition 2.2.10 from
\cite{stormer2013positive}.
Moreover, he helped us complete the list of all possible Jordan subalgebras
of $M_{3}$, which has greatly simplified the proof 
of Theorem \ref{thm:ExposedMaps}.


\bibliographystyle{abbrv}
\bibliography{./biblio}

\end{document}